\definecolor{dgreen}{rgb}{0,0.5,0}
\def\cA{{\mathcal{A}}}
\def\cB{{\mathcal{B}}}
\def\cC{{\mathcal{C}}}
\def\cS{{\mathcal{S}}}
\newcommand{\ket}[1]{|{#1}\rangle}
\newcommand{\bra}[1]{\langle{#1}|}
\renewcommand{\Re}{\mathop{\mathrm{Re}}\limits}
\newtheorem{thm}{Theorem}
\newtheorem{lem}{Lemma}
\newtheorem{prop}{Proposition}
\theoremstyle{definition} 
\newtheorem{defn}{Definition}
\theoremstyle{remark} 
\newtheorem*{rem}{Remark}
\begin{document}
\title[Entanglement and Quantumness]{Quantumness and Entanglement Witnesses}
\author{Paolo Facchi$^{1,2}$, Saverio Pascazio$^{3,2}$, Vlatko Vedral$^{4,5,6}$ and  Kazuya Yuasa$^{7}$}
\address{$^{1}$Dipartimento di Matematica and MECENAS, Universit\`a di Bari, I-70125  Bari, Italy}
\address{$^{2}$INFN, Sezione di Bari, I-70126 Bari, Italy}
\address{$^{3}$Dipartimento di Fisica and MECENAS, Universit\`a di Bari, I-70126  Bari, Italy}
\address{$^{4}$Department of Atomic and Laser Physics, Clarendon Laboratory, University of Oxford, Oxford OX1 3PU, UK} 
\address{$^{5}$Centre for Quantum Technologies, National University of Singapore, Singapore 117543, Singapore}
\address{$^6$Department of Physics, National University of Singapore, Singapore 117542, Singapore}
\address{$^{7}$Waseda Institute for Advanced Study, Waseda University, Tokyo 169-8050, Japan}
\date{\today}

\begin{abstract}
We analyze the recently introduced notion of quantumness witness and compare it to that of entanglement witness. We show that any entanglement witness is also a quantumness witness. We then consider some physically relevant examples and explicitly construct some witnesses. 
\end{abstract}

\pacs{
03.67.Mn;
03.65.Fd}

\maketitle

\section{Introduction}
The quantal features of a physical system pertain both to its states and observables. States evolve according to the Schr\"odinger equation and admit a probabilistic interpretation. Observables make up an algebra of operators, that in general do not commute and cannot be simultaneously measured. Defining quantumness and classicality is an interesting and subtle problem, that can be tackled from different perspectives, both in physics and mathematics. 

Composed quantum systems, made up of two or more subsystems, can be entangled. Entanglement is a very peculiar quantum characteristic and has become an important resource in quantum information and quantum applications. 
Both concepts of entanglement and quantumness are often investigated by framing them in terms of inequalities: entanglement and separability are discriminated through the Bell inequality \cite{bell}, while quantumness and classicality are discriminated through the Leggett-Garg inequality {\cite{LG,Leggett-JPC2002}. 

A recent attempt in the study of quantumness and classicality has been made by Alicki and collaborators \cite{AVR,APVR}, who introduced the idea of ``quantumness witness,'' motivating interesting experiments \cite{brida08,bridagisin}. Both the theoretical proposal and the experiments mainly focused on single qubits, in the attempt to test their quantum features and rule out (semi)classical descriptions.

In this article we shall adopt this approach by focusing on composed systems. We shall propose a combined framework by casting these notions in terms of mathematical definitions and the idea of witnesses. In particular, we shall show that any entanglement witness is also a quantumness witness. 
To this end, we shall make use of a fully algebraic approach \cite{Araki,BratteliRobinson}.

We shall start by introducing notation and definitions in Sec.\ \ref{clq}. The definitions we shall propose are somewhat more general than those of Refs.\  \cite{AVR,APVR}. 
We show in Sec.\ \ref{alleq} that any entanglement witness is a quantumness witness. 
Some explicit examples will be worked out in Sec.\ \ref{explicitc}, where we look in particular at the Bell inequality. We conclude with a few remarks in Sec.\ \ref{concl}.

\section{Classicality, quantumness and entanglement}
\label{clq}
We introduce notation and define quantumness and entanglement witnesses.
We shall only consider finite dimensional systems.

\subsection{Quantumness witnesses}
\label{QW}
We have the following characterization of commutative (i.e.\ classical) algebras.
\begin{thm}[\cite{AVR,APVR}]
\label{thm:abelian}
Given a $C^*$-algebra $\mathcal{A}$, the following two statements are equivalent:
\begin{enumerate}
\item
\label{classicalC} 
$\mathcal{A}$ is commutative.
To wit, for any pair $X, Y\in\mathcal{A}$,
\begin{equation*}
[X,Y] :=XY-YX=0.
\end{equation*}
\item
\label{classicalB}
For any pair $X, Y\in\mathcal{A}$ with  $X\geq 0$ and $Y\geq 0$,
\begin{equation*}
\{X,Y\}:=XY + YX \geq 0.
\end{equation*}
\end{enumerate}
\end{thm}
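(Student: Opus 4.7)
The claim is an equivalence, so I would split it into the two implications (i)$\Rightarrow$(ii) and (ii)$\Rightarrow$(i). The forward direction is the easy one: assuming $\mathcal{A}$ commutative and $X,Y\geq 0$, the continuous functional calculus gives a positive square root $X^{1/2}\in\mathcal{A}$ that also commutes with $Y$, so
\begin{equation*}
\{X,Y\}=2XY=2X^{1/2}YX^{1/2}\geq 0,
\end{equation*}
since the map $Z\mapsto X^{1/2}ZX^{1/2}$ preserves positivity. This takes care of (i)$\Rightarrow$(ii) in a couple of lines and uses nothing beyond the standard $C^*$-algebraic facts already available in the paper.

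For the converse, I would argue by contrapositive: if $\mathcal{A}$ is noncommutative, produce explicit positive $X,Y\in\mathcal{A}$ such that $\{X,Y\}$ fails to be positive. Since the paper restricts to finite dimensions, I would invoke the Wedderburn decomposition $\mathcal{A}\cong\bigoplus_k M_{n_k}(\mathbb{C})$; noncommutativity forces at least one $n_k\geq 2$, and by padding with zeros on the other summands it suffices to exhibit the witnesses inside a single $M_2(\mathbb{C})$ block. Inside that block I would take two non-orthogonal rank-one projections $P=\ket{\phi}\bra{\phi}$ and $Q=\ket{\psi}\bra{\psi}$ with $0<|\bracket{\phi}{\psi}|<1$, which are manifestly positive but do not commute.

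The computation that closes the argument is short: writing $\ket{\psi}=\alpha\ket{\phi}+\beta\ket{\phi_\perp}$ with $|\alpha|,|\beta|\in(0,1)$ and working in the basis $\{\ket{\phi},\ket{\phi_\perp}\}$, one finds
\begin{equation*}
\{P,Q\}=\begin{pmatrix}2|\alpha|^2 & \alpha\bar\beta\\ \bar\alpha\beta & 0\end{pmatrix},
\end{equation*}
whose determinant $-|\alpha|^2|\beta|^2$ is strictly negative; hence $\{P,Q\}$ has a negative eigenvalue and is not positive. Embedding these $P,Q$ back into $\mathcal{A}$ as elements supported on the chosen $M_{n_k}$ summand yields the desired counterexample.

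The one step that requires a bit of thought is the reduction to a $2\times 2$ block: one must be comfortable invoking the finite-dimensional structure theorem and ensuring that the embedding of $M_{n_k}(\mathbb{C})$ into $\mathcal{A}$ preserves positivity and the anticommutator structure, which it does because it is a $*$-homomorphism. Everything else is either the functional calculus (forward direction) or a direct $2\times 2$ calculation. If one wanted to avoid Wedderburn, an alternative would be to apply the GNS construction to a state on $\mathcal{A}$ under which $A$ and $B$ with $[A,B]\neq 0$ remain non-commuting, and then look at two non-commuting spectral projections of $A$ and $B$; the essential linear algebra is identical, and the negativity of $\{P,Q\}$ as computed above is the crux of the whole theorem.
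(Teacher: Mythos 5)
Your proof is correct, but there is nothing in the paper to compare it against step by step: Theorem~\ref{thm:abelian} is imported from \cite{AVR,APVR} and stated without proof, so the authors never supply their own argument. On its merits, your forward direction is the standard one: in a commutative algebra $\{X,Y\}=2XY=2X^{1/2}YX^{1/2}=2(Y^{1/2}X^{1/2})^*(Y^{1/2}X^{1/2})\geq 0$, and this is fine. Your converse, via the Wedderburn decomposition $\mathcal{A}\cong\bigoplus_k M_{n_k}(\mathbb{C})$ and the $2\times 2$ computation showing that two non-orthogonal, non-commuting rank-one projections $P=\ket{\phi}\bra{\phi}$, $Q=\ket{\psi}\bra{\psi}$ have $\det\{P,Q\}=-|\alpha|^2|\beta|^2<0$, is sound, and the appeal to finite dimensionality is legitimate in context because the paper explicitly restricts to finite-dimensional systems; the embedding of the block back into $\mathcal{A}$ is indeed a positivity-preserving $*$-homomorphism, so the negative eigenvalue survives. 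One caveat worth recording: the theorem as stated concerns an arbitrary $C^*$-algebra, and the argument in \cite{AVR,APVR} works in that generality, whereas your reduction does not carry over verbatim — Wedderburn is unavailable, and (this also affects your GNS alternative) spectral projections of a self-adjoint element need not belong to a general $C^*$-algebra, only to the enveloping von Neumann algebra. In the general case one replaces the projections by positive elements obtained from the noncommuting self-adjoint pair via continuous functional calculus (continuous bump functions of $A$ and $B$ rather than their spectral projections), after which essentially the same $2\times 2$-type negativity computation closes the argument; within the finite-dimensional scope of this paper, however, your proof is complete as written.
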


As a consequence, for a quantum system one can always find pairs of observables 
$X\geq 0$, $Y\geq 0$ such that the observable 
\begin{eqnarray}
\label{wqsym}
Q_{\mathrm{AVR}} = \{X,Y\} 
\end{eqnarray}
is \emph{not} positive semidefinite.
Thus, $Q_{\mathrm{AVR}}\in\cA$ is a ``witness" of the quantumness (i.e.\ noncommutativity) of the algebra $\cA$ \cite{AVR,APVR}.

We  define classical states, a concept that will be useful in the following.
\begin{defn}
We say that a state $\rho\in\cS(\cA)$ is \emph{classical} if 
\begin{equation}
\rho([X,Y])=0, 
\quad\mbox{for any pair}\   X, Y\in\cA .
\label{rhoclassical}
\end{equation}
A state that is not classical is \emph{quantum}.
\end{defn}
\begin{rem}
We recall that the set $\cS$ of states  of a given algebra $\cA$ is the subset of the continuous linear complex functionals $\rho \in \cA^*$ (the dual space of $\cA$) that are positive and normalized, i.e.\ $\rho(A^* A)\geq 0$ for any $A\in\cA$ ($A^*$ being the adjoint), and $\rho(\mathbb{I})=1$.
See \cite{Araki,BratteliRobinson}.
\end{rem}
\begin{rem}
Let us also recall that (normal) states $\rho\in\cS$ can be \emph{uniquely} realized as traces over density matrices $\tilde{\rho}$ \emph{belonging} to the algebra $\cA$:
\begin{equation}
\rho(A) = \tr (\tilde{\rho} A), \qquad  \tilde{\rho}\in \cA, \quad  \tilde{\rho} \geq 0,\quad \tr\tilde{\rho}=1.
\label{rhodens}
\end{equation}
We warn the reader that in the following we will freely use  this identification and
commit the sin of not distinguishing between states and density matrices.
\end{rem}

Notice that we can have classical states even when the algebra is noncommutative (namely, even when there exist $A$ and $B$ such that $[A,B] \neq 0$). 
In words, classical states do not ``perceive'' nonvanishing commutators.
Moreover, the definition (\ref{rhoclassical}) of classical state is weaker than the notion of classicality that emerges from (\ref{classicalC})--(\ref{classicalB}) of Theorem~\ref{thm:abelian}. Indeed, $\cA$ is commutative iff  \emph{every} state $\rho\in\cS$ is classical.

\begin{rem}
Let us notice that, in general, mixtures are not classical states. For example, a qubit state $\rho=p\ket{0}\bra{0}+q\ket{1}\bra{1}$ is not classical, since it possesses coherence, e.g.\ $\bra{-}\rho\ket{+}=c_0c_1(p-q)$ for $\ket{+}=c_0\ket{0}+c_1\ket{1}$ and $\ket{-}=c_1^*\ket{0}-c_0^*\ket{1}$,  which is nonvanishing provided $p\neq q$ and $c_0,c_1\neq0$. On the other hand, the completely mixed state $\rho=\mathbb{I}/2$ is classical, in that it does not possess any coherence, $\bra{-}\rho\ket{+}=0$ for any $c_0$ and $c_1$.
\end{rem}

Let us now define quantumness witnesses.
\begin{defn}
We  say that an observable $Q\in\cA$ is a \emph{quantumness witness} (QW) if
\begin{enumerate}
\item for any classical state $\rho\in\cS$ one gets
$\rho(Q)\geq 0$,
\item there exists a (quantum) state $\sigma\in\cS$ such that 
$\sigma(Q) < 0$.
\end{enumerate}
\end{defn}
The fact that the particular observables $Q_{\mathrm{AVR}}$ in~(\ref{wqsym}) are QWs follows from the following lemma. 
\begin{lem}
For any classical state $\rho\in\cS$ and for any pair $X,Y\in\cA$ with $X\geq0$, $Y\geq 0$ it happens that
\begin{equation}
\rho (\{X,Y\}) \geq 0.
\end{equation} 
\end{lem}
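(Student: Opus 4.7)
The plan is to exploit the fact that a classical state $\rho$ behaves like a trace, namely satisfies the cyclic property $\rho(AB)=\rho(BA)$ for all $A,B\in\cA$ (which is just a restatement of $\rho([A,B])=0$). Once we have this, positivity of $\rho(\{X,Y\})$ should follow by rewriting the product $XY$ into a manifestly positive form, taking square roots of the positive operators $X$ and $Y$ and using positivity of $\rho$.

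More concretely, first I would introduce the positive square roots $X^{1/2},Y^{1/2}\in\cA$ (which exist in the $C^*$-algebra since $X,Y\geq 0$) and write $X=X^{1/2}X^{1/2}$. Then I would apply classicality in the form $\rho(AB)=\rho(BA)$ with $A=X^{1/2}$ and $B=X^{1/2}Y$, obtaining
\begin{equation*}
\rho(XY)=\rho(X^{1/2}\cdot X^{1/2}Y)=\rho(X^{1/2}Y\cdot X^{1/2})=\rho(X^{1/2}YX^{1/2}).
\end{equation*}
Next I would note that $X^{1/2}YX^{1/2}=(Y^{1/2}X^{1/2})^*(Y^{1/2}X^{1/2})\geq 0$, so by positivity of the state $\rho(X^{1/2}YX^{1/2})\geq 0$, hence $\rho(XY)\geq 0$. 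The same argument (or a second application of cyclicity) gives $\rho(YX)=\rho(XY)\geq 0$, whence $\rho(\{X,Y\})=\rho(XY)+\rho(YX)\geq 0$.

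I do not anticipate a genuine obstacle: the only thing to be a little careful about is the use of the continuous functional calculus to guarantee that $X^{1/2},Y^{1/2}$ lie in $\cA$, and the fact that cyclicity in the weak form $\rho([A,B])=0$ is being used with one of the factors being itself a product (which is legitimate because the defining property holds for all pairs in $\cA$, and products of elements are again in $\cA$). The proof is essentially a one-line consequence of cyclicity plus positivity, and it also makes transparent that every classical state behaves, on pairs of positive elements, exactly as the normalized trace would on a commutative algebra.
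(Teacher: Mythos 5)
Your proof is correct and follows essentially the same route as the paper: both use classicality as cyclicity of $\rho$ together with positivity of the state on elements of the form $C^*C$, the only cosmetic difference being that you factor $X=X^{1/2}X^{1/2}$, $Y=Y^{1/2}Y^{1/2}$ via square roots while the paper writes $X=A^*A$, $Y=B^*B$ and cycles one factor around.
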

\begin{rem}
In words, classical states do not even perceive the possible negativity of the anticommutators $\{X,Y\}$: their behavior is fair with respect to~(\ref{classicalC})--(\ref{classicalB}) of Theorem~\ref{thm:abelian}. 
\end{rem}
\begin{proof}
Since $\rho$ is classical we get
\begin{equation}
\rho(\{X,Y\})= \rho(2XY - [X,Y]) = 2 \rho(XY).
\end{equation}
Recall that an observable $X$ is nonnegative iff $X= A^* A$ for some $A\in\cA$.  Therefore,
\begin{equation}
\rho(XY)= \rho(A^* A B^* B)
\end{equation}
for some $A,B\in\cA$. By using again the definition of classicality (\ref{rhoclassical}) we conclude
\begin{equation}
\rho(XY) = \rho(B A^* A B^*) = \rho (C^* C) \geq 0,
\end{equation}
with $C= A B^* \in \cA$.
\end{proof}

\subsection{Entanglement witnesses}
Let our system be made up of two \emph{subsystems}, that will conventionally be sent to Alice and Bob, whose observations are independent. The notion of independence is reflected in the fact that the total algebra of observables is assumed to factorize in two subalgebras 
\begin{equation}
\mathcal{C}=\mathcal{A}\otimes\mathcal{B}.
\label{Atot}
\end{equation}
Namely, the two subalgebras commute with each other, but each subalgebra can be noncommutative.

\begin{defn}
A state $\rho\in\cS(\mathcal{C})$ is said to be \emph{separable} (with respect to the given bipartition $\cA\otimes\cB$)
if it can be written as a convex combination of product states, namely,
\begin{equation}
\rho= \sum_k p_k \rho_k \otimes \sigma_k, \qquad p_k>0, \qquad \sum_k p_k=1,
\label{sepdef}
\end{equation}
where $\rho_k\in\cS(\cA)$ and $\sigma_k\in\cS(\cB)$ are states of $\cA$ and $\cB$, respectively. A state that is not separable is said to be \emph{entangled} (with respect to the given bipartition).
\end{defn}

\begin{rem}
The definition of separability depends on the algebra $\cC$ of the composed system, that in general can be reducible, i.e.\ the matrices $C\in\mathcal{C}$ are block-diagonal, $C= \bigoplus_k C_k$. If states are identified with density matrices belonging to the algebra as in~(\ref{rhodens}), then they inherit the block-diagonal form of the latter.
\end{rem}

\begin{defn}[\cite{entanglementrev,entrevhoro}]
We  say that an observable $E\in\mathcal{C}$ is an \emph{entanglement witness} (EW) if
\begin{enumerate}
\item for any separable state $\rho\in\cS(\mathcal{C})$ one gets
$\rho(E)\geq 0$,
\item there exists a (entangled) state $\sigma\in\cS(\mathcal{C})$ such that 
$\sigma(E) < 0$.
\end{enumerate}
\end{defn}

\section{All EWs are QWs}
\label{alleq}
We now show that every EW is also a QW\@.
We first consider a preliminary lemma.
\begin{lem}
Any classical state is separable.
\end{lem}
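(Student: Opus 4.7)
The plan is to exploit the fact that classical (i.e., tracial) states have density matrices in the center of the algebra, and then to use the tensor factorization $\cC=\cA\otimes\cB$ together with the classification of finite-dimensional $C^*$-algebras to decompose any such central density matrix as an explicit convex combination of product states.

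First I would translate the classicality condition into the density-matrix picture. Writing $\rho(A)=\tr(\tilde{\rho}A)$ with $\tilde{\rho}\in\cC$, the condition $\rho([X,Y])=0$ for all $X,Y\in\cC$ becomes $\tr([\tilde{\rho},X]Y)=0$ for every $X,Y\in\cC$. Since the trace is nondegenerate on the finite-dimensional algebra $\cC$, this forces $[\tilde{\rho},X]=0$ for all $X\in\cC$; in other words, $\tilde{\rho}$ lies in the center $Z(\cC)$.

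Next I would invoke the structure theorem for finite-dimensional $C^*$-algebras to decompose $\cA=\bigoplus_{a}M_{n_a}$ and $\cB=\bigoplus_{b}M_{m_b}$, so that $\cC=\cA\otimes\cB=\bigoplus_{a,b}M_{n_a m_b}$ and $Z(\cC)=Z(\cA)\otimes Z(\cB)$ is spanned by the tensor products $P_a\otimes Q_b$ of the block-identity projectors on $\cA$ and $\cB$. A positive, trace-one element of this center therefore admits the expansion
\begin{equation*}
\tilde{\rho}=\sum_{a,b}p_{ab}\,\frac{P_a}{n_a}\otimes\frac{Q_b}{m_b},\qquad p_{ab}\geq 0,\quad\sum_{a,b}p_{ab}=1,
\end{equation*}
because each $P_a/n_a$ and $Q_b/m_b$ is a legitimate density matrix on $\cA$ and $\cB$, respectively. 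This is manifestly of the separable form (\ref{sepdef}), so $\rho$ is separable.

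The main (and only mild) obstacle is the first step, namely justifying that $\tilde{\rho}$ must belong to $Z(\cC)$: one has to be careful that the trace restricted to the possibly reducible algebra $\cC$ remains nondegenerate, so that vanishing of $\tr([\tilde{\rho},X]Y)$ against every $Y\in\cC$ genuinely implies vanishing of the commutator inside $\cC$. Once this is secured, the remaining steps are essentially bookkeeping with the structure theorem and the explicit form of $Z(\cA\otimes\cB)$.
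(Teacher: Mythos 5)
Your proof is correct and follows essentially the same route as the paper's: decompose $\cC=\cA\otimes\cB$ into irreducible blocks via the structure theorem for finite-dimensional algebras and observe that classical states are exactly the convex combinations of block-wise maximally mixed product states, which are manifestly separable. The only difference is that you spell out, via nondegeneracy of the trace pairing, why classicality forces $\tilde{\rho}$ into the center $Z(\cC)=Z(\cA)\otimes Z(\cB)$, a step the paper asserts without detailed justification.
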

\begin{proof}
Notice first that if the algebra $\cC=\cA\otimes \cB$  is the full algebra of operators 
\begin{equation}
\cC=B(\mathbb{C}^n)\otimes B(\mathbb{C}^m) ,
\label{eq:totA}
\end{equation}
then the only classical state is the totally mixed state,
\begin{equation}
\rho= \mathbb{I}_{nm}/nm= \mathbb{I}_n/n \otimes \mathbb{I}_m/m,
\label{eq:totmix}
\end{equation}
which is obviously separable.
In general, however, the (sub)algebras $\cA$ and $\cB$ are reducible (i.e.\ they are proper subalgebras of the full matrix algebra) and one has
\begin{equation}
\fl \qquad \cC = \Big(\bigoplus_k B(\mathbb{C}^{n_k}) \Big) \otimes \Big(\bigoplus_l B(\mathbb{C}^{m_l})\Big) =
\bigoplus_{k,l} B(\mathbb{C}^{n_k}) \otimes B(\mathbb{C}^{m_l})=: \bigoplus_{k,l} \cC_{kl},
\label{eq:dirsum}
\end{equation}
where each $\cC_{kl}$ is an irreducible algebra of dimension $n_k m_l$. All observables are block-diagonal and the classical states have the form
\begin{equation}
\rho = \bigoplus_{k,l} p_{kl} \mathbb{I}_{n_k}/{n_k} \otimes \mathbb{I}_{m_l}/{m_l},
\end{equation}
with $p_{kl}\geq 0$ and $\sum_{kl}p_{kl}=1$, i.e.\ they are separable.
\end{proof}
\begin{rem}
Notice that if the two subalgebras are reducible, states inherit their block-diagonal structure. See Remark after Eq.\ (\ref{sepdef}).
\end{rem}

Our main theorem is now an easy consequence of the lemma just proved.
\begin{prop}
Any EW is a QW\@.
\end{prop}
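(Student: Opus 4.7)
The plan is to chain the two defining conditions of an EW through the lemma just established. Let $E\in\mathcal{C}$ be an arbitrary entanglement witness; I want to verify the two clauses in the definition of a quantumness witness.

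First I would dispatch the nonnegativity condition. Pick any classical state $\rho\in\cS(\cC)$. By the preceding lemma, $\rho$ is separable, so it admits a decomposition of the form (\ref{sepdef}). Since $E$ is an EW, it is nonnegative on every separable state, hence in particular $\rho(E)\geq 0$. This is precisely clause~(i) of the QW definition.

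Next, clause~(ii) of the QW definition is essentially free: it demands the existence of some state $\sigma\in\cS(\cC)$ with $\sigma(E)<0$, which is exactly clause~(ii) of the EW definition applied to an entangled $\sigma$. So the same witnessing state does the job in both contexts.

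There is no real obstacle here once the lemma ``classical $\Rightarrow$ separable'' is in hand; the proposition is an immediate corollary of the containment of the set of classical states inside the set of separable states, together with the fact that the ``negativity on some state'' requirement is weaker when the distinguished subset (classical vs.\ separable) is smaller. The only thing worth underscoring in the write-up is that the implication is strict: there exist separable states that are not classical (e.g.\ a pure product state with coherences in each factor), so EWs form in general a proper subset of QWs, consistent with the intuition that detecting entanglement is a finer task than detecting noncommutativity.
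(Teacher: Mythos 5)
Your proof is correct and follows essentially the same route as the paper: classical $\Rightarrow$ separable handles the nonnegativity clause, and the entangled state violating the EW condition serves as the witnessing state for the QW condition. The only cosmetic difference is that the paper explicitly notes (via the contrapositive of the lemma) that this entangled state is a quantum state, a point your write-up leaves implicit but which is automatic anyway, since any state on which $E$ is negative cannot be classical once clause~(i) is established.
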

\begin{proof}
Consider an EW $E\in\cC$. By definition $\rho(E)\geq 0$ for any separable $\rho\in\cS$. But by the previous lemma all classical states are separable. It follows that $\rho(E)\geq 0$ for any classical state $\rho$. Moreover, by definition, $\sigma(E) <0$ for some entangled state $\sigma$, which by the previous lemma must be a quantum state. Thus, $E$ is a QW\@.
\end{proof}
\begin{rem}
The converse is, of course, not true. If the algebra $\cA$ is noncommutative, and $Q\in\cA$ is a QW of the quantum state $\sigma\in \cS(\cA)$, then
\begin{equation}
\tilde{Q}=Q\otimes \mathbb{I}\in \cC
\end{equation} 
is also a QW (of the total algebra), but it is \emph{not} an EW\@. Indeed,  it is negative on separable states of the form $\sigma\otimes \omega$ [for any $\omega \in \cS(\cB)$], namely,
\begin{equation}
(\sigma\otimes \omega) (\tilde{Q}) <0.
\end{equation}
\end{rem}

\section{Explicit construction of EWs as anticommutators for qudits}
\label{explicitc}
In the previous section we have shown that an EW is always a QW\@. In view of that, among all QWs, those of the simple form~(\ref{wqsym}), that we shall call \emph{anticommutator QWs}, are quite interesting for possible applications, for example for efficiently generating EWs out of anticommutators. Therefore, here we shall  investigate whether an EW  $E$ can be written in the \emph{particular} form (\ref{wqsym}), namely, whether there exists a pair of positive operators $X$ and $Y$ such that  $ E= \{X,Y\}$.
Before looking at a rather general case, we consider some instructive examples.

\subsection{Swap operator and Bell inequality}
An example of EW for a $d\times d$ system is the  swap operator \cite{entrevhoro}
\begin{equation}
S=\sum_{i,j=0}^{d-1}\ket{i}\bra{j}\otimes\ket{j}\bra{i}.
\label{eq:swapop}
\end{equation}
Here the algebra is the full algebra of matrices $B(\mathbb{C}^d)\otimes B(\mathbb{C}^d)$, and $\{\ket{j}\}_j$ is a chosen orthonormal basis of $\mathbb{C}^d$ (computational basis). $S$ is nonnegative, $\rho(S)\ge0$, for all separable states $\rho$, but  it possesses an eigenvalue equal to $-1$.

Another interesting example of EW is the Bell-CHSH observable
\begin{equation}
\label{Bell }
E_{\mathrm{Bell}} = 2 \pm (A_1\otimes B_1 + A_1\otimes B_2+A_2\otimes B_1 - A_2\otimes B_2),
\end{equation}
where $A_{1,2}\in\mathcal{A}$ and $B_{1,2}\in\mathcal{B}$ are dichotomic observables (with eigenvalues $\pm1$) of Alice and Bob, respectively, and $A_{1,2}^2=\mathbb{I}$, $B_{1,2}^2=\mathbb{I}$.
If $\rho(E_\mathrm{Bell})<0$, $E_\mathrm{Bell}$ witnesses the violation of the Bell-CHSH inequality in the entangled state $\rho$.

For instance, if we take 
\begin{eqnarray}
\label{bells}
A_1=\sigma_x,\qquad
B_1=\frac{1}{\sqrt{2}}(\sigma_x+\sigma_y),\nonumber\\
A_2=\sigma_y,\qquad
B_2=\frac{1}{\sqrt{2}}(\sigma_x-\sigma_y),
\end{eqnarray}
where $\sigma_{x,y,z}$ are Pauli operators
\begin{equation}
\sigma_x=\ket{0}\bra{1}+\ket{1}\bra{0},\ \ %
\sigma_y=-i\ket{0}\bra{1}+i\ket{1}\bra{0},\ \ %
\sigma_z=\ket{0}\bra{0}-\ket{1}\bra{1},
\end{equation}
then
\begin{equation}
E_\mathrm{Bell}=2\pm\sqrt{2}(\sigma_x \otimes \sigma_x +\sigma_y \otimes \sigma_y).
\label{eq:EBell1}
\end{equation}
Observe now that the swap operator (\ref{eq:swapop}) and the Bell-CHSH observable (\ref{eq:EBell1}) are related by
\begin{equation}
S=P_{00}+P_{11}\pm\frac{1}{2\sqrt{2}}(E_\mathrm{Bell}-2),
\label{S-Bell}
\end{equation}
where
\begin{equation}
P_{ij}=\ket{i}\bra{i}\otimes\ket{j}\bra{j}\qquad (i,j = 0,1)
\end{equation}
are projections.

Due to the negative shift $-2$ in (\ref{S-Bell}), $S$ is more efficient at witnessing entanglement than $E_\mathrm{Bell}$: $S$ can actually detect entangled states that do not violate the Bell inequality.
For instance, let $\ket{\pm}=(\ket{0}\pm\ket{1})/\sqrt{2}$, then the entanglement of the vector state
\begin{equation}
 \ket{\chi}=  a\ket{+}\otimes\ket{-}+b\ket{-}\otimes\ket{+}
\end{equation}
is witnessed by $S$ if $\Re(a^*b)<0$, 
while $E_{\mathrm{Bell}}$ in Eq.\ (\ref{eq:EBell1}) (with the $+$ sign) is negative only for $\Re(a^*b)<-(\sqrt{2}-1)/2$.

\subsection{Reviewing previous results}
We now briefly review some results obtained in Ref.\ \cite{APVR}. Let 
\begin{eqnarray}
X & = & 2 \pm (A_1\otimes B_1 + A_1\otimes B_2) \geq 0,\nonumber\\
Y & = & 2 \pm (A_2\otimes B_1 - A_2\otimes B_2) \geq 0, 
\label{XYdef}
\end{eqnarray}
with dichotomic observables $A_{1,2}\in\mathcal{A}$, $B_{1,2}\in\mathcal{B}$. 
One easily gets
\begin{eqnarray}
XY & = & 2 E_\mathrm{Bell} + (A_1A_2\otimes B_2B_1 - A_1A_2\otimes B_1B_2), \nonumber \\
YX & = & 2 E_\mathrm{Bell} + (A_2A_1\otimes B_1B_2 - A_2A_1\otimes B_2B_1).
\label{XYYX}
\end{eqnarray}

If the algebra $\mathcal{A}$ of Alice \emph{or} the algebra $\mathcal{B}$ of Bob 
is \emph{commutative},
the sum of the terms in brackets in (\ref{XYYX}) cancel and    
\begin{equation}
\label{BellQW}
Q_{\mathrm{AVR}} = \{X,Y\} = 4E_\mathrm{Bell}.
\end{equation}
As explained in Sec.\ \ref{QW}, since $[X,Y]=0$ and $X,Y\geq 0$, their symmetrized product 
must also be nonnegative: $Q_{\mathrm{AVR}}/4 =E_\mathrm{Bell}\geq0$. This is the Bell-CHSH inequality.

On the other hand, if the subalgebras $\cA$ and $\cB$ of Alice and Bob are \emph{both} noncommutative, one gets 
\begin{equation}
\label{BellQWcomm}
Q_{\mathrm{AVR}} = \{X,Y\} = 4E_\mathrm{Bell} - [A_1,A_2]\otimes [B_1,B_2],
\label{QAVR-A}
\end{equation}
which coincides with the result obtained in Ref.\ \cite{APVR} modulo a factor $2$. The above expression, with the choice of operators as in (\ref{bells}), turns out to be \emph{positive semidefinite} for any Bell state.
Therefore $Q_{\mathrm{AVR}}$ is not witnessing entanglement. 
$Q_{\mathrm{AVR}}$ can be shown to be negative for suitable factorized states, so it tests the quantumness of the individual subsystems.

\subsection{The Bell-CHSH inequality is also an anticommutator QW}
\label{BQW}
Let
\begin{eqnarray}
X & = & 2 \pm (A_1\otimes B_1 - A_2\otimes B_2) \geq 0, \nonumber \\
Y & = & 2 \pm (A_1\otimes B_2 + A_2\otimes B_1) \geq 0, \label{XYKP}
\end{eqnarray}
which are symmetric under the exchange $A \leftrightarrow B$, in contrast to those in (\ref{XYdef}).
Then 
\begin{eqnarray}
XY & = & 2 E_\mathrm{Bell} + [A_1,A_2] \otimes  \mathbb{I} + \mathbb{I} \otimes [B_1,B_2], \nonumber \\
YX & = & 2 E_\mathrm{Bell} - [A_1,A_2] \otimes  \mathbb{I}  - \mathbb{I} \otimes [B_1,B_2], 
\label{XYYXKP}
\end{eqnarray}
so that
\begin{equation}
\label{EWQW0}
Q_{\mathrm{AVR}} = \{X,Y\} = 4E_\mathrm{Bell}.
\end{equation}
This shows that the EW $E_\mathrm{Bell}$ is also an anticommutator QW\@:
if the Bell-CHSH inequality is violated by an entangled state $\rho$, then $\rho(Q_{\mathrm{AVR}}) < 0$\@.

An interesting remark is the following one: assume you have two particles, on which  Alice and Bob measure dichotomic observables. They put together their results and find that a state $\rho$ exists such that $\rho(E_{\mathrm{Bell}})<0$. Then they can conclude that their local observables do not commute.\footnote{In this case \emph{both} algebras $\cA$ and $\cB$ are noncommutative. Indeed, it is  easy to prove that if one of the two algebras were classical then any state $\rho$ of the composed system would necessarily be separable. See e.g.\  Prop.~2.5 in \cite{Keyl}.} In this sense, one can say that the Bell inequality is testing quantumness, and not simply entanglement: by looking only at the correlations of the two subsystems, one can check whether the two local (sub)algebras are noncommutative.

\subsection{A more general case}
Let us consider a more general case, i.e.\ the swap operator $S$ defined in (\ref{eq:swapop}) for a pair of qudits.
For a pair of qubits, it reads in the basis $\{\ket{00},\ket{01},\ket{10},\ket{11}\}$ (here $\ket{jk}:= \ket{j}\otimes\ket{k}$)
\begin{equation}
\label{EWQW}
S=\left(\begin{array}{cccc}
1&&&\\
&0&1&\\
&1&0&\\
&&&1
\end{array}\right),
\end{equation}
and for a pair of qutrits (in the basis $\{\ket{00},\ket{01},\ket{10},\ket{02},\ket{20},\ket{11},\ket{12},\ket{21},\ket{22}\}$),
\begin{equation}
\label{EWQW1}
S=\left(\begin{array}{ccccccccc}
1&&&&&&&&\\
&0&1&&&&&&\\
&1&0&&&&&&\\
&&&0&1&&&&\\
&&&1&0&&&&\\
&&&&&1&&&\\
&&&&&&0&1&\\
&&&&&&1&0&\\
&&&&&&&&1
\end{array}\right) .
\end{equation}
In this way, for a generic $d\times d$ system, $S$ is block-diagonal and contains two-by-two blocks
\begin{equation}
\left(\begin{array}{cc}
0&1\\
1&0
\end{array}\right)
\end{equation}
and these yield negative eigenvalues equal to $-1$.
The explicit construction of EWs of the form (\ref{wqsym}) for qudits is therefore reduced to understanding whether there exists a pair of positive operators $X$ and $Y$ ($X,Y\ge0$) such that
\begin{equation}
\left(\begin{array}{cc}
0&1\\
1&0
\end{array}\right)
\stackrel{?}{=}\{X,Y\}
\end{equation}
sector by sector.
It is trivial to construct $X$ and $Y$ for the diagonal elements $1$ in Eq.\ (\ref{EWQW}) or  (\ref{EWQW1}).

\subsubsection{$Q_{\mathrm{AVR}}$ for a generic two-state system.}
Let us first determine the eigenvalues of a QW of  the type (\ref{wqsym}) for a two-state system.
Generic positive operators $X$ and $Y$ of a two-state system can be expressed as
\begin{equation}
X=\frac{1}{2}\alpha(1+\bm{u}\cdot\bm{\sigma}),\qquad
Y=\frac{1}{2}\beta(1+\bm{v}\cdot\bm{\sigma}),
\end{equation}
with vectors $\bm{u}$ and $\bm{v}$, whose lengths are limited by
\begin{equation}
0\le u,v\le1,
\end{equation}
and with positive constants $\alpha,\beta>0$ (we exclude $\alpha,\beta=0$, since we are interested in nontrivial operators).
The anticommutator QW is then given by
\begin{equation}
Q_{\mathrm{AVR}}=\{X,Y\}
=\frac{1}{2}\alpha\beta[1+\bm{u}\cdot\bm{v}+(\bm{u}+\bm{v})\cdot\bm{\sigma}]
\label{eq:qwxyyx}
\end{equation}
and admits two eigenvalues
\begin{equation}
\lambda_\pm
=\frac{1}{2}\alpha\beta(1+\bm{u}\cdot\bm{v}\pm|\bm{u}+\bm{v}|),
\label{eqn:Lambdapm}
\end{equation}
the corresponding eigenstates $\ket{\lambda_\pm}$ being the two eigenstates of the operator $(\bm{u}+\bm{v})\cdot\bm{\sigma}$.

In order for (\ref{eq:qwxyyx}) to be a quantumness witness, one must have $\lambda_-<0$.
This entails
\begin{equation}
1+\bm{u}\cdot\bm{v}<|\bm{u}+\bm{v}|,
\end{equation}
which yields the condition
\begin{equation}
\cos^2\theta<\frac{u^2+v^2-1}{u^2v^2},
\label{eqn:Cond_QW_Qubit}
\end{equation}
where $\theta$ is the angle between the two vectors $\bm{u}$ and $\bm{v}$ and $\bm{u}\cdot\bm{v}=uv\cos\theta$.
See Fig.\ \ref{fig:QW_Qubit_Range}(a) for the range of $(u^2+v^2-1)/u^2v^2$.
See also Fig.\ \ref{fig:QW_Qubit_Range}(b), where the smallest attainable ratio $\lambda_-/\lambda_+$ is displayed as a function of $(u,v)$.
This shows that the $u=v=1$ case covers the widest range of eigenvalues $(\lambda_+,\lambda_-)$. 
\begin{figure}[b]
\begin{center}
\begin{tabular}{l@{\qquad}l}
(a)&(b)\\[-3.5truemm]
\includegraphics[height=0.32\textwidth]{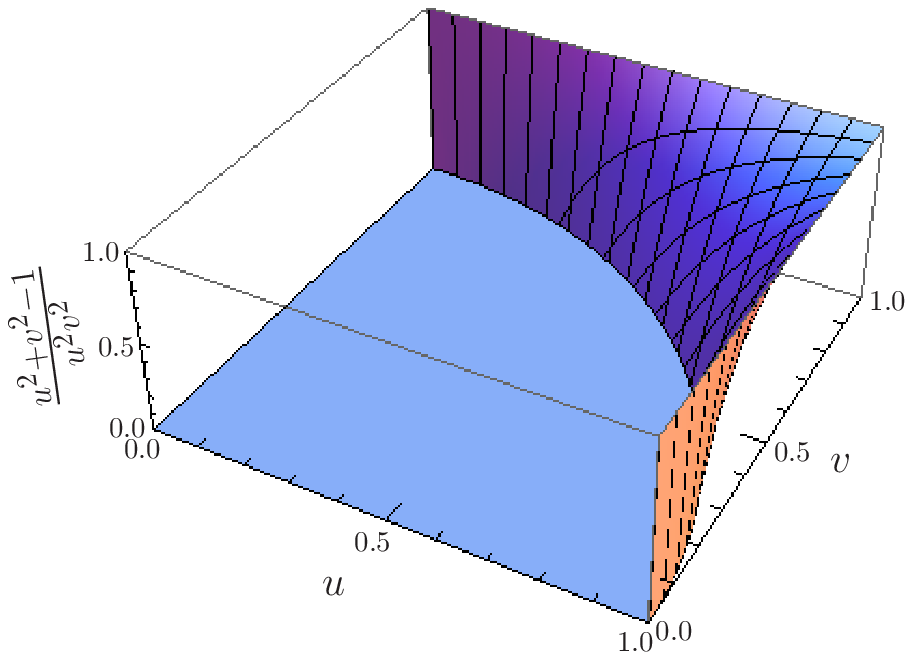}&
\includegraphics[height=0.32\textwidth]{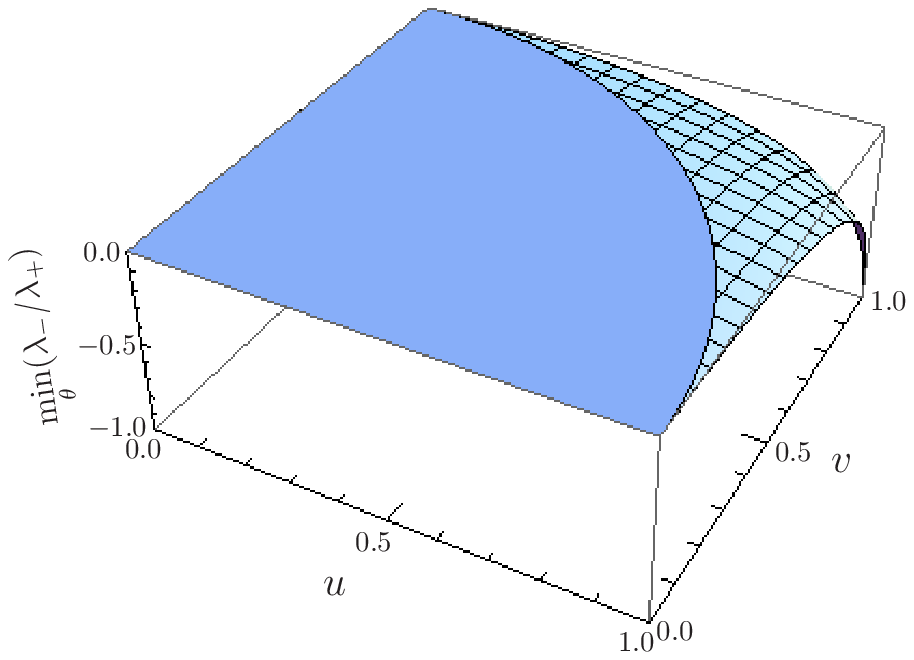}
\end{tabular}
\end{center}
\caption{(a) $(u^2+v^2-1)/u^2v^2$ as a function of $(u,v)$, yielding the upper bound on $\cos^2\theta$ for each $(u,v)$.
See (\ref{eqn:Cond_QW_Qubit}).
If $\theta$ violates this bound, $Q_\mathrm{AVR}=\{X,Y\}$ in (\ref{eq:qwxyyx}) is no longer a QW\@.
In the region where this upper bound is negative, $Q_\mathrm{AVR}$ can never be a QW\@. Therefore, only the positive range is shown.
(b) Smallest attainable ratio $\min_\theta(\lambda_-/\lambda_+)$ of the eigenvalues in (\ref{eqn:Lambdapm}) as a function of $(u,v)$. 
$\lambda_-/\lambda_+$ should be strictly negative for $Q_\mathrm{AVR}$ to be a QW\@.
Therefore, only the negative range is shown.
The ratio $\lambda_-/\lambda_+$ can come close to $-1$ only when $u=v=1$.}
\label{fig:QW_Qubit_Range}
\end{figure}

Let us thus fix $u=v=1$.
In this case, the two eigenvalues $\lambda_\pm$ of $Q_{\mathrm{AVR}}$ read
\begin{eqnarray}
\lambda_+=
2\alpha\beta\cos\frac{\theta}{2}\cos^2\frac{\theta}{4},
\quad
\lambda_-
=-2\alpha\beta\cos\frac{\theta}{2}\sin^2\frac{\theta}{4}
\qquad
(0<\theta<\pi).
\end{eqnarray}
Note that $\theta=0$ and $\pi$ are excluded, since $\lambda_-$ vanishes at these points.
The sum and ratio of the two eigenvalues read
\begin{equation}
\lambda_++\lambda_-=2\alpha\beta\cos^2\frac{\theta}{2},\qquad
\frac{\lambda_-}{\lambda_+}
=-\tan^2\frac{\theta}{4},
\label{eqn:SumRatioQ}
\end{equation}
respectively.
In particular, the ratio ranges between
\begin{equation}
-1<\frac{\lambda_-}{\lambda_+}<0.
\label{eqn:RatioQ}
\end{equation}

\subsubsection{$S$ is almost $Q_{\mathrm{AVR}}=\{X,Y\}$.}
Let us now consider a relevant two-by-two sector of $S$, and try to construct positive operators $X$ and $Y$ such that $S=\{X,Y\}$ \emph{in the sector}.
The eigenvalues of $S$ in each relevant two-by-two sector are $1$ and $-1$, whose ratio is $-1$.
Since the ratio of the eigenvalues $\lambda_\pm$ of the QW  (\ref{eq:qwxyyx}), for a two-state system, can only range between $-1<\lambda_-/\lambda_+<0$ as in (\ref{eqn:RatioQ}), there is no hope to construct $X$ and $Y$.
In this sense, the EW $S$ \emph{cannot} be written as an anticommutator QW\@.
However, if we add to $S$ a part proportional to the identity 
\begin{equation}
S\to \xi\mathbb{I}+S,
\end{equation}
the situation changes.
In such a case, the eigenvalues are shifted to
\begin{equation}
1+\xi\quad\textrm{and}\quad{-1+\xi}.
\end{equation}
Notice first that in order for this to remain an EW, $\xi$ should be bounded by $\xi<1$, 
otherwise we lose the negative eigenvalue and $\xi\mathbb{I}+S$ is no longer an EW\@.
In addition, from~(\ref{eqn:RatioQ}), the ratio of the shifted eigenvalues should be bounded by
\begin{equation}
-1<-\frac{1-\xi}{1+\xi}<0,
\end{equation}
in order for $\xi\mathbb{I}+S$ to be expressed as an anticommutator, $\xi\mathbb{I}+S=\{X_\xi,Y_\xi\}$.
This requires $\xi > 0$.
Therefore, $\xi$ should be bounded by
\begin{equation}
0<\xi<1,
\end{equation}
in order for $\xi\mathbb{I}+S$ to be an EW and at the same time an anticommutator QW\@.

Let us construct $X_\xi$ and $Y_\xi$ explicitly, with the lengths of the associated vectors $\bm{u}$ and $\bm{v}$ being $u=v=1$.
The angle $\theta$ between the two vectors $\bm{u}$ and $\bm{v}$ is fixed by the condition
\begin{equation}
\frac{\lambda_-}{\lambda_+}
=-\frac{1-\xi}{1+\xi}
=-\tan^2\frac{\theta}{4}.
\end{equation}
See (\ref{eqn:SumRatioQ}).
Hence, 
\begin{equation}
\xi=\cos\frac{\theta}{2},
\label{eqn:xi}
\end{equation}
and 
\begin{equation}
\left\{\begin{array}{l}
\medskip
\displaystyle
\bm{u}=(\sqrt{1-\xi^2}\cos\varphi,\sqrt{1-\xi^2}\sin\varphi,\xi),\\
\displaystyle
\bm{v}=(-\sqrt{1-\xi^2}\cos\varphi,-\sqrt{1-\xi^2}\sin\varphi,\xi),
\end{array}
\right.
\end{equation}
where $\varphi$ is an arbitrary parameter $0\le\varphi<2\pi$, and the $z$ direction is chosen in the direction of $\bm{u}+\bm{v}$, i.e.\ %
\begin{eqnarray}
\sigma_z
=\frac{\bm{u}+\bm{v}}{|\bm{u}+\bm{v}|}\cdot\bm{\sigma}
=\ket{\lambda_+}\bra{\lambda_+}
-\ket{\lambda_-}\bra{\lambda_-},
\nonumber\\
\sigma_x=\ket{\lambda_+}\bra{\lambda_-}
+\ket{\lambda_-}\bra{\lambda_+},\qquad
\sigma_y=-i\ket{\lambda_+}\bra{\lambda_-}
+i\ket{\lambda_-}\bra{\lambda_+}.
\label{eqn:SigmasRep}
\end{eqnarray}
See Fig.\ \ref{fig:uv}.
\begin{figure}
\begin{center}
\includegraphics[width=0.5\textwidth]{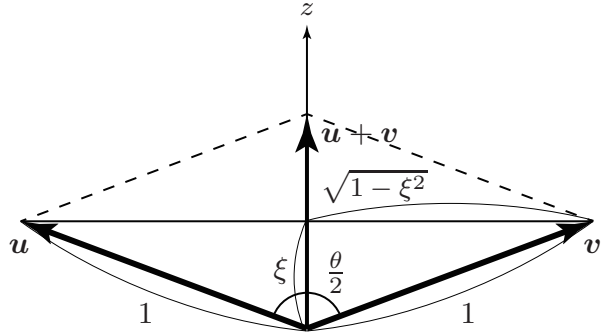}
\end{center}
\caption{Arrangement of $\bm{u}$ and $\bm{v}$ for $\xi\mathbb{I}+S=\{X_\xi,Y_\xi\}$.}
\label{fig:uv}
\end{figure}
On the other hand, by plugging (\ref{eqn:xi}) into the first relation in (\ref{eqn:SumRatioQ}), one has
\begin{equation}
\lambda_++\lambda_-
=2\xi
=2\alpha\beta\cos^2\frac{\theta}{2}
=2\alpha\beta\xi^2,
\end{equation}
which yields
\begin{equation}
\alpha\beta=\frac{1}{\xi}.
\end{equation}
Therefore, by~(\ref{eqn:SigmasRep}) we get
\begin{eqnarray}
X_\xi,Y_\xi
=\frac{1}{2\sqrt{\xi}}\,\biggl[&&
(1+\xi)
\ket{\lambda_+}\bra{\lambda_+}
+(1-\xi)\ket{\lambda_-}\bra{\lambda_-}
\\
&&{}
\pm\sqrt{1-\xi^2}
(
e^{-i\varphi}\ket{\lambda_+}\bra{\lambda_-}
+e^{i\varphi}\ket{\lambda_-}\bra{\lambda_+}
)
\biggr].
\end{eqnarray}
In particular, for a pair of qubits, we have $\ket{\lambda_\pm}=(\ket{01}\pm\ket{10})/\sqrt{2}$, 
so that
\begin{equation}
\xi\mathbb{I}+S
=\left(\begin{array}{cccc}
\xi+1&0&0&0\\
0&\xi&1&0\\
0&1&\xi&0\\
0&0&0&\xi+1
\end{array}\right),
\end{equation}
and
\begin{equation}
\fl
X_{\xi},Y_{\xi}
=\left(\begin{array}{cccc}
\medskip
\displaystyle\sqrt{\frac{1+\xi}{2}}&0&0&0\\
\medskip
0&\displaystyle\frac{1\pm\sqrt{1-\xi^2}\cos\varphi}{2\sqrt{\xi}}&
\displaystyle\frac{\xi\pm i\sqrt{1-\xi^2}\sin\varphi}{2\sqrt{\xi}}&0\\
\medskip
0&\displaystyle\frac{\xi\mp i\sqrt{1-\xi^2}\sin\varphi}{2\sqrt{\xi}}&
\displaystyle\frac{1\mp\sqrt{1-\xi^2}\cos\varphi}{2\sqrt{\xi}}&0\\
0&0&0&\displaystyle\sqrt{\frac{1+\xi}{2}}
\end{array}\right).
\end{equation}
Therefore, we get
\begin{equation}
S = \{X_{\xi}, Y_{\xi}\} - \xi \mathbb{I}, \qquad 0<\xi<1. 
\end{equation}
Since the positive shift $\xi$ can be  arbitrarily small, the EW $S$ is almost (but not quite) an anticommutator QW\@.
$X_\xi$ and $Y_\xi$ can be constructed in the same manner for higher-dimensional systems, sector by sector.

\section{Conclusions and perspectives}
\label{concl}
We have discussed the notions of quantumness and entanglement, showing that every entanglement witness is also a quantumness witness. Although entanglement is clearly a genuine quantum feature, our analysis makes use of strict mathematical definitions of witnesses. This enables one to put (physical) intuition on firm mathematical grounds. In turn, theorems and their derivations disclose alternative viewpoints:
we observed in Sec.\ \ref{BQW} that the Bell inequality, written as a QW, tests the ``global" quantumness of the composed system. This enables one to look at the Bell inequality from a novel perspective.

An interesting aspect that could be investigated in the future is whether the combined notions of quantumness and entanglement witnesses could shed light on the elusive notion of bound entanglement \cite{horod-a,horod-b}, for which the PT criterion does not apply.

We conclude by noting that the links between nonclassicality and entanglement have been also investigated in quantum optics \cite{vogel}.  At the root of this approach there is the idea that the partial transpose of some positive operators can detect nonclassicality in the light fields, witnessing it through suitable Cauchy-Schwarz inequalities \cite{chendeng}. Finally, the notion of ``partial" quantumness/classicality, suitably defined in order to provide a finer graining between the quantum and the classical worlds, can be shown to incorporate an ordering relation among different nonclassical correlations and entanglement measures \cite{PA}. In this analysis, that makes extensive use of inequalities, focus is on the quantum-to-classical transition.
Although the language used in the above-mentioned investigations is slightly different from that used in this article, the links between these methods are worth exploring in the future.

\ack 
We thank R.\ Fazio and A.\ J.\ Leggett for interesting remarks and useful suggestions.
This work is partially supported by the Joint Italian-Japanese Laboratory on ``Quantum Technologies: Information, Communication and Computation" of the Italian Ministry of Foreign Affairs.
P.F.\ acknowledges support through the project IDEA of the University of Bari.
K.Y.\ is supported by the Program to Disseminate Tenure Tracking System and the Grant-in-Aid for Young Scientists (B) (No.\ 21740294) both from the Ministry of Education, Culture, Sports, Science and Technology, Japan.
P.F., S.P.\ and K.Y.\ would like to thank the Centre for Quantum Technologies of the National University of Singapore for the kind hospitality.

\section*{References}

\end{document}